\newtheorem{theorem}{Theorem}
\newtheorem{lemma}{Lemma}
\newtheorem{proposition}{Proposition}
\newtheorem{example}{Example}
\newtheorem{definition}[theorem]{Definition}
\newcommand{\1}{\mathbbm{1}}
\newcommand{\be}{\begin{equation}}
\newcommand{\bea}{\begin{eqnarray}}
\newcommand{\eea}{\end{eqnarray}}
\newcommand{\ee}{\end{equation}}
\def\id{{\rm id}}
\def\E{{\cal E}}
\renewcommand{\>}{\rangle}
\newcommand{\<}{\langle}
\def\C{\mathbb{C}}
\def\E{\mathcal{E}}
\def\1{\mathbbm{1}}
\def\MDD{{\cal M}_{D\times D}}
\def\N{\mathbb N}
\begin{document}
%
\title{A quantum version of Wielandt's inequality}
%
%
%

\author{Mikel~Sanz,
        David~P\'erez-Garc\'ia,
        Michael~M.~Wolf,
        and~Juan~I.~Cirac
\thanks{Manuscript received September 30, 2009; revised .}
\thanks{M.~Sanz and J.~I.~Cirac acknowledge the support of the QCCC Program of the
EliteNetzWerk Bayern and the DFG (project FOR635), D.~P\'erez-Garc\'ia acknowledges the support of the Spanish grants I-MATH, MTM2008-01366, S2009/ESP-1594 and the EU project QUEVADIS and M.~M.~Wolf acknowledges support by QUANTOP, the Danish Natural Science Research Council (FNU) and the EU projects QUEVADIS and COQUIT.}%
\thanks{M.~Sanz and J.~I.~Cirac are with the Max-Planck-Institut f\"ur Quantenoptik, Hans-Kopfermann-Str. 1, 85748 Garching, Germany.}
\thanks{D.~P\'erez-Garc\'ia is with the Dpto. An\'alisis Matem\'atico and IMI, Universidad Complutense de Madrid, 28040 Madrid, Spain.}
\thanks{M.~M.~Wolf is with the Niels Bohr Institute, Blegdamsvej 17, 2100 Copenhagen, Denmark.}}

\maketitle

\begin{abstract}
In this paper, Wielandt's inequality for classical channels is extended to quantum channels.
That is, an upper bound to the number of times a channel must
be applied, so that it maps any density operator to one with full
rank, is found. Using this bound, dichotomy theorems for the
zero--error capacity of quantum channels and for the Matrix Product
State (MPS) dimension of ground states of frustration-free
Hamiltonians are derived. The obtained inequalities also imply new bounds on the
required interaction-range of Hamiltonians with unique MPS ground
state.
\end{abstract}

\begin{IEEEkeywords}
classical channels, information rates, quantum channels, spin systems, strongly correlated electrons, Wielandt's inequality.
\end{IEEEkeywords}

%
\IEEEpeerreviewmaketitle

\section{Introduction}
%
%
%
%
\IEEEPARstart{C}{onsider} a classical memoryless channel acting in discrete time on
an alphabet of size $D$. Such a channel is described by a stochastic
matrix $A\in \MDD$ which is called \emph{primitive} \cite{Horn} if
there is an $n\in \N$ such that $(A^n)_{i,j} > 0$ for all $i,j$. The minimum $n$
for which this occurs, $p(A)$, is called the (classical) \emph{index
of primitivity} of $A$ (or the \emph{exponent} of $A$). This ensures
that after applying the channel $p(A)$ times to any probability
distribution, there will be a non--zero probability for any possible
event. \emph{Wielandt's inequality} \cite{Wielandt} states that, for
every primitive matrix,
\begin{equation*}
p(A)\le D^2 - 2D + 2
\end{equation*}
and this is the optimal bound which is independent of the matrix
elements. Wielandt's
inequality has a wide range of applications in different fields,
ranging from Markov chains \cite{Seneta} to graph theory and number
theory \cite{Frobenius}, and numerical analysis \cite{Varga}.

In this work, we derive a quantum analogue of Wielandt's inequality.
That is, we consider \emph{quantum channels}, i.e., trace preserving
completely positive linear maps (TPCPM) and define a property
analogous to primitivity: the existence of an $n\in\mathbb{N}$ such
that after the $n$-fold application of the quantum channel every
positive semidefinite operator is mapped onto a  positive definite
operator. The smallest such $n$ then defines a quantum index of
primitivity, $q$. We begin by relating primitivity to some other
properties of quantum channels, such as the existence of a unique
full-rank fixed point or the fact that the Kraus operators
corresponding to some number of applications of the channel span the
full space of matrices. This will allow us to derive a quantum
Wielandt's inequality for primitive quantum channels,
\begin{equation*} q\le
(D^2 - d + 1)D^2
\end{equation*}
where $D$ is the dimension of the Hilbert space, and $d$ the
number of linearly independent Kraus operators. We will see that,
under certain generic conditions on the Kraus operators, better
inequalities can be derived. Finally, we apply the new inequalities
to three problems related to channel capacities and to quantum spin
chains:  we derive a dichotomy theorem for the zero--error capacity
of quantum channels and prove a conjecture for ground states of
frustration-free spin Hamiltonians. Moreover, we show that our
result also has new implications concerning the interaction--range
of Hamiltonians with MPS as unique ground states \cite{PVWC07}.


\section{Basic notions}
Let us start by fixing the notation and
introducing the basic notions. We will consider quantum channels,
i.e., TPCPMs,  ${\cal E}:\MDD \to \MDD$, where $\MDD$ is the space
of complex $D \times D$ matrices. Let us denote by ${\cal E}_A$ the
quantum channel with Kraus operators $\{A_k\in\MDD\}_{k=1}^d$, i.e.,
\begin{equation}
 {\cal E}_A(X)=\sum_{k=1}^d A_k X A_k^\dagger.
\end{equation}
We define $S_n(A)\subseteq \MDD$ as the linear space spanned by all
possible products of exactly $n$ Kraus operators, $A_{k_1} A_{k_2}
\ldots A_{k_n}$, and by $A^{(n)}_k$ the elements of $S_n(A)$. There
is a one-to-one correspondence between a quantum channel $\E$ and
its Choi matrix $\omega(\E):=(\id\otimes\E)(\Omega)$ where
$\Omega=\sum_{i,j=1}^D|ii\>\<jj|$. It is readily verified that ${\rm
rank}\big[\omega({\E_A}^n)\big]=\dim\big[S_n(A)\big]$. We further
define $H_n(A,\varphi):= S_n(A)|\varphi\rangle \subseteq \C^D$ as
the space spanned by all vectors $A_{k_1} A_{k_2} \ldots
A_{k_n}|\varphi\rangle$, where $|\varphi\rangle\in\C^D$. That is,
${\rm
rank}\big[{\E_A}^n(|\varphi\>\<\varphi|)\big]=\dim\big[H_n(A,\varphi)\big]$.

We introduce now three properties which will later turn out to be
equivalent:
\begin{description}
  \item[(a)] A quantum channel  ${\cal E}_A$ is called {\em primitive} if
there exists some $n\in \N$ such that for all
$|\varphi\rangle\in \C^D$, $H_n(A,\varphi)=\C^D$. In other
words, if for every input density operator $\rho$ the output
${\E_A}^n(\rho)$ obtained after $n$ applications of the channel
has full-rank. We will denote by $q(\E_A)$ the minimum $n$ for
which that condition is fulfilled. Note that if ${\cal E}_A$ is
primitive, then for every $m\in\mathbb{N}$, ${\cal E}_A^m$ is
primitive, too, and  we have $H_n(A,\varphi)=\C^D$ for all $n\ge
q(\E_A)$.
  \item[(b)] A quantum channel ${\cal E}_A$ is said to have \emph{eventually
full Kraus rank} if there exists some $n\in \N$ such that
$S_n(A)=\MDD$, i.e., if ${\rm
rank}\big[\omega({\E_A}^n)\big]=D^2$. We denote by $i(A)$ the
minimum $n$ for which that condition is satisfied. Obviously, if
${\cal E}_A$ fulfills this property, then $S_n(A)=\MDD$ for all
$n\ge i(A)$.
  \item[(c)] We say that a quantum channel ${\cal E}_A$ is {\em strongly
irreducible}\footnote{The notion of 'irreducibility', used for
instance in \cite{gen-fro}, differs from our definition of
'strong irreducibility' by allowing for other eigenvalues of
magnitude one. In fact, $\E$ is strongly irreducible iff $\E^n$
is reducible for all $n\in\mathbb{N}$. This property is known as
\emph{injectivity} in the context of Matrix Product States.} if
the following two conditions are fulfilled: (i) ${\cal E}_A$ has
a unique eigenvalue, $\lambda$, with $|\lambda|=1$; (ii) the
corresponding eigenvector, $\rho$, is a  positive
definite operator ($\rho > 0$). This implies the convergence
 \begin{equation}
 \label{Einfty} \lim_{n\to\infty} {\cal E}_A^n = {\cal
 E}_A^\infty,
 \end{equation}
where ${\cal E}_A^\infty(X) := \rho \, {\rm tr}(X)$. Note that,
for instance, the generalized Frobenius theorem proved in
\cite[Theorem 2.5]{gen-fro} ensures that a TPCPM always has an
eigenvalue $\lambda=1$ with eigenvector $\rho\ge 0$. In this
case it was already known \cite[Lemma 5.2]{FNW92} that there
exits an upper bound for $i(A)$ related with the second
eigenvalue $\lambda_2$ of $\E_A$, which is essentially
$i(A)\lesssim O(\exp \frac{1}{\lambda_2})$.
\end{description}

Our first simple observation is that (b) implies (a), or stated
quantitatively:
\begin{proposition}\label{prop:iq}
For every quantum channel $\E_A$ we have that $q(\E_A)\leq i(A)$.
\end{proposition}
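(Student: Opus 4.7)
The plan is to unpack the two definitions and observe that the desired inequality follows from a one-line linear-algebra fact about the matrix algebra acting on a vector.

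First, I would recall the identity $H_n(A,\varphi) = S_n(A)\ket{\varphi}$, which is immediate from the definitions of these two spaces: $H_n(A,\varphi)$ is spanned by vectors of the form $A_{k_1}\cdots A_{k_n}\ket{\varphi}$, and $S_n(A)$ is spanned by the matrices $A_{k_1}\cdots A_{k_n}$. So the vector space $H_n(A,\varphi)$ is exactly the image of $\ket{\varphi}$ under the span of these products.

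Next I would use the assumption $n = i(A)$, which says $S_n(A) = \MDD$ (equivalently, $\mathrm{rank}[\omega(\E_A^n)] = D^2$). Combining with the previous identity gives
\begin{equation*}
H_n(A,\varphi) = \MDD \ket{\varphi}
\end{equation*}
for every $\ket{\varphi} \in \C^D$. The key elementary observation is then that $\MDD \ket{\varphi} = \C^D$ whenever $\ket{\varphi} \neq 0$, since for any target $\ket{\psi}\in\C^D$ the rank-one matrix $\ket{\psi}\bra{\varphi}/\braket{\varphi}{\varphi}$ lies in $\MDD$ and maps $\ket{\varphi}$ to $\ket{\psi}$. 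Hence $H_n(A,\varphi) = \C^D$ for all nonzero $\ket{\varphi}$.

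Since this is precisely the condition defining $n \geq q(\E_A)$ (density operators have nonzero support, and by convexity it suffices to verify the full-rank output condition on pure states), we conclude $q(\E_A) \leq n = i(A)$. There is no real obstacle here; the content of the proposition is essentially the observation that the natural action of the full matrix algebra on any nonzero vector is surjective, so eventual full Kraus rank trivially forces the pointwise full-rank output condition at the same step.
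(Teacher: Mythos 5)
Your proof is correct. It takes a slightly different route from the paper's: you argue at the level of the Kraus-product span, using $H_n(A,\varphi)=S_n(A)\ket{\varphi}$ together with the elementary fact that the full matrix algebra acts transitively on nonzero vectors ($\MDD\ket{\varphi}=\C^D$ for $\ket{\varphi}\neq 0$), whereas the paper works with the Choi matrix directly: since $\omega(\E_A^n)>0$, the sandwiched operator $\E_A^n(\proj{\varphi})=(\1\otimes\bra{\bar\varphi})\,\omega(\E_A^n)\,(\1\otimes\ket{\bar\varphi})$ is again positive definite. The two arguments are dual to one another via the identities $\mathrm{rank}[\omega(\E_A^n)]=\dim[S_n(A)]$ and $\mathrm{rank}[\E_A^n(\proj{\varphi})]=\dim[H_n(A,\varphi)]$ stated in the paper, so neither buys anything substantive over the other; yours is perhaps more explicitly algebraic, the paper's more operational. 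Your parenthetical handling of the passage from pure states to general density operators (and the implicit restriction to nonzero $\ket{\varphi}$) is the right reading of the paper's definition of primitivity and closes the only loose end.
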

\begin{proof}
Take any $n\geq i(A)$. Then, by the definition of $i(A)$, the Choi
matrix $\omega(\E_A^n)$ has full-rank, so that
\begin{equation*}
\E_A^n(|\varphi\>\<\varphi|)=\;(\1\otimes\<\bar\varphi|)\;
\omega(\E_A^n)\;(\1\otimes|\bar\varphi|)
\end{equation*}
also has full-rank.
\end{proof}

Before continuing the analysis of the relationships among the three
properties above in the quantum context, let us connect them to the
classical notion of primitivity.  Given a stochastic matrix $A =
(a_{ij})$, let us consider the map ${\cal E}_A$ defined by the Kraus
operators $A_{i,j}=\sqrt{a_{i,j}} |i\rangle\langle j|$. ${\cal E}_A$
has the property that for an operator $\rho$ with entries
$\rho_{i,j}=\delta_{i,j} p_i\ge 0$, $\rho':={\cal E}_A (\rho)$ is
diagonal with $\rho'_{i,j}=\delta_{i,j} p'_i$, with $p'=Ap$. Thus,
${\cal E}_A$ implements the stochastic map $A$, i.e., the quantum
channel reduces to the classical channel when applied to diagonal
density operators. Note that $d$ is the number of positive entries
of the stochastic matrix in the classical case, so the general quantum bound
applied to a classical channel is always worse than the classical
bound.

Let us consider $A$ primitive and denote by $p(A)$ its classical
index of primitivity. Then, we have:

\begin{proposition}
Let us consider a primitive stochastic map $A$ and the corresponding
TPCPM ${\cal E}_A$. Then, $\E_A$ is also primitive and the equality
$q(A)=p(A)=i(A)$ holds.
\end{proposition}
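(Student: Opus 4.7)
The plan is to compute what the spaces $S_n(A)$ and $H_n(A,\varphi)$ become for this very structured family of Kraus operators, and then read off the three indices from the combinatorics of paths in the directed graph underlying $A$.

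First I would work out the products explicitly. A product of $n$ Kraus operators has the form
\begin{equation*}
A_{i_1 j_1} A_{i_2 j_2} \cdots A_{i_n j_n} = \sqrt{a_{i_1 j_1} a_{i_2 j_2}\cdots a_{i_n j_n}}\;|i_1\rangle\langle j_1|i_2\rangle\langle j_2|\cdots|i_n\rangle\langle j_n|,
\end{equation*}
which is zero unless $j_k = i_{k+1}$ for $k=1,\dots,n-1$, in which case it reduces to a positive multiple of $|i_1\rangle\langle j_n|$. Matching indices describe precisely a directed path of length $n$ from $j_n$ to $i_1$ in the graph whose edge $k\to\ell$ exists iff $a_{\ell k}>0$, and the nonvanishing coefficient is the corresponding product of transition amplitudes. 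Since such a path exists iff $(A^n)_{i_1 j_n}>0$, we obtain
\begin{equation*}
 S_n(A)=\mathrm{span}\{\,|i\rangle\langle j|\,:\,(A^n)_{ij}>0\,\}.
\end{equation*}
Because the matrix units in this span are linearly independent, $S_n(A)=\MDD$ holds iff $A^n$ is entry-wise strictly positive, which by definition first occurs at $n=p(A)$. Hence $i(A)=p(A)$.

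Next I would apply Proposition \ref{prop:iq} to conclude immediately that $q(\E_A)\le i(A)=p(A)$, which already shows that $\E_A$ is primitive. For the reverse inequality I would test $H_n(A,\varphi)$ on the computational basis: from the formula above,
\begin{equation*}
 H_n(A,|\ell\rangle) = \mathrm{span}\{\,|i\rangle\,:\,(A^n)_{i\ell}>0\,\}.
\end{equation*}
If $n=q(\E_A)$, then this space must equal $\C^D$ for every $\ell$, which forces $(A^n)_{i\ell}>0$ for every pair $(i,\ell)$. By minimality of $p(A)$ this gives $q(\E_A)\ge p(A)$, and combining the two bounds yields $q(\E_A)=p(A)=i(A)$.

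There is no real obstacle here beyond bookkeeping: once the telescoping of the rank-one Kraus operators is written out, everything reduces to translating ``entry of $A^n$ is positive'' into ``a matrix unit lies in $S_n(A)$'' or ``a basis vector lies in $H_n(A,\varphi)$''. The only mild subtlety is remembering that it suffices to test $H_n$ on basis vectors to force $A^n$ to be entry-wise positive; taking general $|\varphi\rangle$ would only strengthen the requirement and give the same conclusion.
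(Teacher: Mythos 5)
Your proposal is correct and follows essentially the same route as the paper: the telescoping of the rank-one Kraus operators identifies $S_n(A)$ with the span of matrix units $|i\rangle\langle j|$ having $(A^n)_{ij}>0$ (giving $i(A)=p(A)$), Proposition~\ref{prop:iq} gives $q(\E_A)\le i(A)$, and evaluating $H_n$ on basis vectors gives $q(\E_A)\ge p(A)$. The only cosmetic difference is that the paper closes the cycle $p\le q\le i\le p$ (summing over all intermediate indices to exhibit $(A^{p(A)})_{ij}|i\rangle\langle j|\in S_{p(A)}(A)$), whereas you prove $i(A)=p(A)$ directly in both directions; the content is the same.
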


\begin{proof}
It is clear that $p(A)\le q(\E_A)$ and we proved in Prop.
\ref{prop:iq} that $q(\E_A)\le i(A)$. In order to show that $i(A)\le
p(A)$, we define $\tilde A_{i,j}=\sqrt{a_{i,j}} A_{i,j}$, $n =
p(a)-1$, and take
 \begin{equation}
 \sum_{k_1,\ldots,k_n=1}^D \tilde A_{i,k_1} \tilde A_{k_1,k_2}\ldots
 \tilde A_{k_n,j}= (A^{p(a)})_{i,j} |i\rangle\langle j|\ne 0.
 \end{equation}
Thus, $|i\rangle\langle j|\in S_{p(A)}(A)$ for all $i,j$.
\end{proof}

We note that $q(\E_A)$ is different from $i(A)$ in the general case.
To see that, let us consider an example with $d=3$, $D=2$, and take
as Kraus operators $\sigma_\alpha/\sqrt{3}$, where $\alpha=x,y,z$
labels the three Pauli matrices. Here $q(\E_A) = 1 < i(A) = 2$.

However, the following proposition shows that $i(A)$ is finite
whenever $q(\E_A)$ is. In fact, all three definitions above are
equivalent:
\begin{proposition}\label{prop:equiv}
Given a quantum channel ${\cal E}_A$, the following statements are
equivalent: (a) ${\cal E}_A$ is primitive; (b) ${\cal E}_A$ has
eventually full Kraus rank; (c) ${\cal E}_A$ is strongly
irreducible.
\end{proposition}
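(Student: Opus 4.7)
The plan is to close the implication cycle $(b)\Rightarrow(a)\Rightarrow(c)\Rightarrow(b)$. The arrow $(b)\Rightarrow(a)$ is already Proposition~\ref{prop:iq}, so only the other two implications need to be established.

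For $(c)\Rightarrow(b)$ I would exploit the limit $\E_A^n\to\E_A^\infty$ that strong irreducibility delivers in \eqref{Einfty}. A direct computation gives
\begin{equation*}
\omega(\E_A^\infty)\;=\;(\id\otimes\E_A^\infty)(\Omega)\;=\;\sum_i |i\>\<i|\otimes\rho\;=\;\1\otimes\rho,
\end{equation*}
which has rank $D^2$ precisely because $\rho>0$. Since rank is lower-semicontinuous, $\mathrm{rank}\,\omega(\E_A^n)=D^2$ for all sufficiently large $n$; equivalently $S_n(A)=\MDD$, so $\E_A$ has eventually full Kraus rank.

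For $(a)\Rightarrow(c)$ I would proceed in three steps. Step (i): by the generalized Frobenius theorem there exists a fixed point $\rho\ge 0$; primitivity forces $\E_A^n(\rho)=\rho$ to be full rank for $n\ge q(\E_A)$, hence $\rho>0$, which is clause (ii) of strong irreducibility. Step (ii): to obtain uniqueness of the fixed point, suppose a second Hermitian fixed point $\sigma\notin\RR\rho$ existed. Since $\rho>0$, the Hermitian pencil $\rho+t\sigma$ is positive semidefinite only on a bounded interval of $t$; at a boundary value $t_0$ one obtains $\rho+t_0\sigma\ge 0$ with nontrivial kernel, a nonzero PSD fixed point that is not full rank, contradicting Step (i). Step (iii): to rule out further peripheral eigenvalues, invoke the standard fact (part of the generalized Frobenius--Perron theory used in \cite{gen-fro}) that eigenvalues of modulus one of a TPCPM are roots of unity. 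If $\E_A(Y)=\lambda Y$ with $|\lambda|=1$ and $\lambda\neq 1$, pick $p$ with $\lambda^p=1$; then $\E_A^p(Y)=Y$, and since $\E_A^p$ is itself primitive (its index is at most $\lceil q(\E_A)/p\rceil$), Step (ii) applied to $\E_A^p$ would imply $Y\in\C\rho$ and thus $\lambda=1$, a contradiction.

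The main obstacle I anticipate is Step (iii): besides quoting the roots-of-unity structure of the peripheral spectrum, one must extract from $Y$ a genuinely \emph{Hermitian} fixed point of $\E_A^p$ linearly independent of $\rho$. Starting from the possibly non-Hermitian $Y$, the two Hermitian combinations $Y+Y^\dagger$ and $i(Y-Y^\dagger)$ are both fixed by $\E_A^p$, and they cannot both be real multiples of $\rho$ unless $Y$ itself is a complex multiple of $\rho$, in which case $\lambda=1$. Everything else in the argument reduces to lower-semicontinuity of rank for $(c)\Rightarrow(b)$ and a boundary-of-positivity argument inside the real fixed-point space for Step (ii).
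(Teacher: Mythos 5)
Your proof is correct in substance, and one of its two legs takes a genuinely different route from the paper. For $(a)\Rightarrow(c)$ you follow essentially the same strategy the paper uses: derive $\rho>0$ from primitivity, kill a second fixed point by walking to the boundary of positivity of $\rho+t\sigma$ (the paper's choice $\epsilon=1/\max[{\rm spec}(\rho^{-1/2}\rho'\rho^{-1/2})]$ is exactly your boundary value $t_0$ in one direction), and reduce extra peripheral eigenvalues to the fixed-point case via roots of unity and primitivity of $\E_A^p$; your explicit handling of non-Hermitian eigenoperators via $Y+Y^\dagger$ and $i(Y-Y^\dagger)$ is a detail the paper glosses over. For $(c)\Rightarrow(b)$, however, you argue directly that $\omega(\E_A^\infty)=\1\otimes\rho$ has rank $D^2$ and invoke lower semicontinuity of the rank along $\omega(\E_A^n)\to\omega(\E_A^\infty)$, whereas the paper runs a proof by contradiction: it picks, for each $n$, a nonzero $B_n$ trace-orthogonal to $S_n(A)$ and derives $|{\rm tr}(\rho B_n^\dagger B_n)|\le Dc_n\,{\rm tr}(B_n^\dagger B_n)$ with $c_n\to 0$, contradicting $\rho>0$. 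Both hinge on the convergence in Eq.~(\ref{Einfty}), but your version is shorter and makes the mechanism (full-rank limit of the Choi matrices) more transparent.

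Two small imprecisions are worth repairing. First, in Step (ii) the set $\{t:\rho+t\sigma\ge 0\}$ need not be bounded (if $\sigma\ge 0$ it is a half-line); what you actually need, and what is true, is that it is a closed proper subset of $\RR$ containing $0$ in its interior, hence has a finite endpoint $t_0$ at which $\rho+t_0\sigma$ is a nonzero positive semidefinite fixed point with nontrivial kernel. Second, the blanket statement that modulus-one eigenvalues of a TPCPM are roots of unity is false (conjugation by a unitary with incommensurate phases is a counterexample); the correct statement, which is what \cite{FNW92} and \cite{gen-fro} provide and what the paper cites, requires the channel to have a unique, full-rank fixed point. Since your Steps (i) and (ii) establish exactly that hypothesis before Step (iii) uses the fact, the argument goes through once the citation is stated in its conditional form.
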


\begin{proof}
We denote by $\rho\ge 0$ an eigenoperator of ${\cal E}_A$
corresponding to the eigenvalue $\lambda=1$.
\begin{trivlist}
\setlength\leftmargin {0em}
  \item[(b) $\Rightarrow$ (a)] \mbox{}\par
This implication is given by Prop.\ref{prop:iq}.
  \item[(a) $\Rightarrow$ (c)] \mbox{}\par
We prove it by contradiction. Let us assume that ${\cal E}_A$ is not
strongly irreducible. Then, we must have at least one of the
following cases: (i) $\rho$ is not full-rank; (ii) there exists
another eigenoperator, $\rho'$, corresponding to $\lambda=1$; (iii)
there is another eigenvalue, $\lambda'$, with $|\lambda'|=1$. Since
for all $n\in N$, ${\cal E}_A^n(\rho)=\rho$, (i) automatically
implies that ${\cal E}_A$ is not primitive. Furthermore, if we have
(ii),  choosing $\epsilon=1/\max[{\rm spec}(\rho^{-1/2}{\rho'}\rho^{-1/2})]$ we have that $\tilde
\rho=\rho-\epsilon \rho'\ge 0$ is not full-rank and thus we are back
in (i). Moreover, it is proven in the demonstration of
\cite[Proposition 3.3]{FNW92} that, if (i) and (ii) do not hold, the
other possible eigenvalues of modulus 1 are the $p$-th roots of
unity for some finite $p\in \N$. Therefore, we have (ii) for ${\cal
E}_A^p$, and thus ${\cal E}_A^p$ cannot be primitive.
  \item[(c) $\Rightarrow$ (b)] \mbox{}\par
This implication can be deduced from \cite[Lemma 5.2]{FNW92}, but we
include here a proof for completeness. We prove it by contradiction.
Let us assume that ${\cal E}_A$ is (j) strongly irreducible, but
(jj) does never get full Kraus rank. If we have (j) then $\rho$ is
full-rank and Eq. (\ref{Einfty}) is fulfilled. Because of (jj), for
all $n\in \N$ and $A^{(n)}_k\in S_n(A)$, there exists some $B_n\ne
0$ such that ${\rm tr}(A^{(n)}_k B_n)=0$. Thus,
 $$
 \left|{\rm tr}(\rho B_n^\dagger B_n)\right|=
 \left|\sum_{k_1,\ldots, k_n} |{\rm tr}(A_{k_1}\cdots A_{k_n} B_n)|^2 - {\rm tr}(\rho B_n^\dagger B_n)\right|$$
 $$=
 \left|{\rm tr}\left[\Omega({\cal E}_A^n\otimes \id)
 (\tilde{B}_n\Omega \tilde{B}_n^\dagger)\right]-{\rm tr}\left[\Omega({\cal E}_A^\infty\otimes \id)
 (\tilde{B}_n\Omega \tilde{B}_n^\dagger)\right]\right|$$
 $$\le c_n\|\Omega\|_\infty {\rm tr}(\tilde{B}_n\Omega \tilde{B}_n^\dagger)=
 D c_n {\rm tr}(B_n^\dagger B_n)
 $$
where $\tilde{B}_n=B_n\otimes\1$ and $\lim_nc_n=0$. If $\rho$ was
full-rank, then for all $X\ge 0$ one would have
$${\rm tr}(\rho X)\ge\frac{1}{\|\rho^{-1}\|_\infty} {\rm tr}(X)$$ and we obtain a contradiction.
\end{trivlist}
\end{proof}

As a consequence of Prop. \ref{prop:equiv}, we obtain that
primitivity of a quantum channel can be decided by observing its
spectral properties. In fact, this  is the precise quantum analogue
of the classical result that a stochastic matrix is primitive iff it
has a unique eigenvalue of maximum modulus and a  positive definite
fixed point (cf. \cite{Horn}).


\section{Quantum Wielandt's inequalities}

In order to reach a quantum version of Wielandt's inequality, i.e.
bounds for $q(\E_A)$ and $i(A)$, we require some preliminary lemmas:

\begin{lemma}
\label{lemma1} Let ${\cal E}_A$ be a primitive quantum channel on
$\MDD$ with $d$ Kraus operators. Then, there is a $A^{(n)}\in
S_n(A)$ with $n\le D^2-d+1$ such that ${\rm tr}(A^{(n)})\ne 0$.
\end{lemma}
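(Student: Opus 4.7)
The plan is to work with the cumulative spans $T_n := S_1(A)+S_2(A)+\cdots+S_n(A)$, which form a nondecreasing chain $T_1 \subseteq T_2 \subseteq \cdots$ of subspaces of $\MDD$. The key structural fact to establish is that once this chain becomes stationary, it stays stationary forever: if $T_n = T_{n+1}$, then $S_{n+1}\subseteq T_n$, and using $S_{n+2} \subseteq \mathrm{span}\{A_i\, x : 1\le i\le d,\ x\in S_{n+1}\}$ together with $A_i\cdot S_j \subseteq S_{j+1}\subseteq T_{n+1}=T_n$ for $j\le n$, one shows inductively that $S_m\subseteq T_n$ for every $m\ge n$, hence $T_m=T_n$ for all $m\ge n$.

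Next I would bound the stabilization step. Since the Kraus operators are linearly independent, $\dim T_1 = d$; and as long as the chain has not yet stabilized, the dimension strictly increases at each step. Because $\dim T_n$ is capped by $D^2$, the first index $n^{*}$ with $T_{n^{*}}=T_{n^{*}+1}$ satisfies $d + (n^{*}-1) \le D^2$, i.e.\ $n^{*}\le D^2-d+1$. Now I invoke primitivity: by Proposition~\ref{prop:equiv}, $\E_A$ has eventually full Kraus rank, so $S_m(A)=\MDD$ for some $m$, whence $T_m=\MDD$. Combined with the stabilization from the previous step, this forces $T_{n^{*}}=\MDD$.

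To finish, suppose for contradiction that $\mathrm{tr}(A^{(n)})=0$ for every $A^{(n)}\in S_n(A)$ and every $n\le n^{*}$. Then the trace functional vanishes on the sum $T_{n^{*}}=\MDD$, which is absurd. Hence some $S_n(A)$ with $n\le n^{*}\le D^2-d+1$ contains an element of nonzero trace, as required. The only slightly delicate point in this argument is the ``once stable, always stable'' lemma for the chain $T_n$; apart from that the proof is a dimension count plus the primitivity hypothesis supplied by Proposition~\ref{prop:equiv}.
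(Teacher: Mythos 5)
Your proof is correct and follows essentially the same route as the paper: the cumulative spans $T_n$, the ``once stable, always stable'' observation, the dimension count giving $n^*\le D^2-d+1$, and primitivity (via Proposition~\ref{prop:equiv}) forcing $T_{n^*}=\MDD$. Your final step (the trace functional cannot vanish on all of $\MDD$) is just a rephrasing of the paper's remark that the identity lies in $T_{D^2-d+1}$, so some summand must have nonzero trace.
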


\begin{proof}
Let us denote by $T_n(A)$ the span of all $S_m(A)$ with $m\le n$. We
just have to show that: (*) for any $n\in \N$, if ${\rm
dim}[T_n(A)]<D^2$, then ${\rm dim}[T_{n+1}(A)]>{\rm dim}[T_{n}(A)]$.
Since ${\rm dim}[T_1(A)]=d$, by iteration we obtain that $T_{D^2 - d
+ 1} = \MDD$. This implies that a linear combination of the elements
of $S_n(A)$ with various $n\le D^2-d+1$ must be equal to the
identity, and thus at least one of the elements must have non--zero
trace. To prove (*) we note that, by definition, $T_{n}(A)\subseteq
T_{n+1}(A)$. If they would be equal, then $T_{m}(A)= T_{n}(A)$ for
all $m>n$. Thus, ${\rm dim}[T_n(A)]=D^2$ since otherwise the map
${\cal E}_A$ would not be primitive.
\end{proof}

\begin{lemma}\label{lemma2}
Let ${\E}_A$ be primitive such that $A_1|\varphi\rangle =
\mu|\varphi\rangle$ with $\mu\ne 0$. Then: (a)
$H_{D-1}(A,\varphi)=\C^D$. (b) If $A_1$ is not invertible, then for
all $|\psi\rangle\in \C^D$, $|\varphi\rangle\langle\psi|\in
S_{D^2-D+1}(A)$;
\end{lemma}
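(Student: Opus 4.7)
For part (a), I would run a chain-of-subspaces argument in the spirit of Lemma~\ref{lemma1}, inside the nested chain $H_0(A,\varphi) \subseteq H_1(A,\varphi)\subseteq\cdots$ in $\C^D$. Monotonicity comes from $|\varphi\rangle = \mu^{-1}A_1|\varphi\rangle$, which rewrites any length-$n$ product acting on $|\varphi\rangle$ as a length-$(n{+}1)$ product. If $H_n = H_{n+1}$, then $H_{n+1} = \sum_k A_k H_n$ implies $A_k H_n\subseteq H_n$ for every $k$, so $H_m = H_n$ for all $m\ge n$; primitivity then forces $H_n = \C^D$. Hence whenever $H_n \ne \C^D$ the dimension strictly grows, and starting from $\dim H_0 = 1$ we reach $H_{D-1} = \C^D$.

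For part (b), I would run the analogous chain inside $V := \{|\varphi\rangle\langle\psi|:\psi\in\C^D\}\subseteq\MDD$ via
\[
  \mathcal{U}_n := \bigl\{\,|\psi\rangle\in\C^D : |\varphi\rangle\langle\psi|\in S_n(A)\,\bigr\}.
\]
The identity $|\varphi\rangle\langle\psi| = \mu^{-1}A_1|\varphi\rangle\langle\psi|$ gives $\mathcal{U}_n\subseteq\mathcal{U}_{n+1}$, and right multiplication $|\varphi\rangle\langle\psi|\,A_k = |\varphi\rangle\langle A_k^\dagger\psi|$ gives $A_k^\dagger\mathcal{U}_n\subseteq\mathcal{U}_{n+1}$. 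The adjoint channel $\E_A^*$ (with Kraus operators $A_k^\dagger$, fixed state $\one$, and the same spectrum as $\E_A$) is strongly irreducible, hence primitive by Prop.~\ref{prop:equiv}, so the algebra generated by the $A_k^\dagger$'s equals $\MDD$ at some finite length. If $\mathcal{U}_n$ were $A_k^\dagger$-invariant for every $k$, it would therefore be $\MDD$-invariant, hence $0$ or $\C^D$. Contrapositively, any proper nonzero $\mathcal{U}_n$ gains at least one dimension at every step, so once $\mathcal{U}_{n_0}\ne 0$ the chain fills $\C^D$ within $D-1$ further steps.

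It remains to produce $n_0\le D^2-2D+2$ with $\mathcal{U}_{n_0}\ne 0$, and this is where non-invertibility of $A_1$ is used. The target $D^2-D+1 = 1+D(D-1)$ matches products of the form
\[
  A_1\,M_{D-2}\,A_1\,M_{D-3}\cdots A_1\,M_1\,A_1,\qquad M_j\in S_{D-1}(A),
\]
of length $(D-1)^2 = D^2-2D+1$. Since $A_1$ is not invertible, $\operatorname{Im}(A_1)$ has dimension at most $D-1$ and contains $|\varphi\rangle$; using part (a), i.e.\ $S_{D-1}(A)|\varphi\rangle = \C^D$, I would choose each $M_j$ so that the block $A_1 M_j$ simultaneously (i) sends some vector of the current image into $\ker A_1$, strictly dropping the image dimension by at least one, and (ii) satisfies $M_j|\varphi\rangle\in\mu^{-1}|\varphi\rangle + \ker A_1$, so that $A_1M_j|\varphi\rangle=|\varphi\rangle$ and $|\varphi\rangle$ remains in the new image. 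After at most $\operatorname{rank}(A_1)-1\le D-2$ such blocks the image collapses to $\C|\varphi\rangle$, yielding a nonzero $|\varphi\rangle\langle\psi_0|\in S_{n_0}(A)$ with $n_0\le D^2-2D+1$; the chain argument then gives $\mathcal{U}_{D^2-D+1}=\C^D$.

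The hard part will be the simultaneous realization of conditions (i) and (ii) on each $M_j$: part (a) only guarantees free choice of the single value $M_j|\varphi\rangle$, whereas we additionally need a controlled value at a second vector of the current image. Non-invertibility of $A_1$ is essential precisely because it makes $\ker A_1$ nontrivial, leaving room for (i) to be realizable at all; pinning down a single $M_j\in S_{D-1}(A)$ meeting both constraints at once is the genuinely new ingredient, while everything else is bookkeeping along the two chains above.
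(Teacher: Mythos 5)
Your part (a) is correct and is essentially the paper's argument: the padding identity $|\varphi\rangle=\mu^{-1}A_1|\varphi\rangle$ makes the chain $H_n(A,\varphi)$ monotone, stabilization forces invariance under all $A_k$ and hence (by primitivity) equality with $\C^D$, and the count from $\dim H_0=1$ gives $H_{D-1}=\C^D$. Your reduction of part (b) to producing a single nonzero element $|\varphi\rangle\langle\psi_0|\in S_{n_0}(A)$ is also sound: the spaces $\mathcal{U}_n$ are genuinely monotone and strictly growing while proper and nonzero, because a subspace invariant under all $A_k^\dagger$ is invariant under the algebra they generate, which equals $\MDD$ since $S_N(A)=\MDD$ for some $N$ (you do not even need to invoke Prop.~\ref{prop:equiv} for the adjoint map, which is unital rather than trace-preserving, so the proposition does not literally apply to it).

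The genuine gap is the seed element. Your construction of $A_1M_{D-2}A_1\cdots M_1A_1$ requires, at each stage, an $M_j\in S_{D-1}(A)$ that simultaneously (i) maps some vector of the current image into $\ker A_1$ and (ii) satisfies $M_j|\varphi\rangle\in\mu^{-1}|\varphi\rangle+\ker A_1$. Part (a) only gives surjectivity of the evaluation map $M\mapsto M|\varphi\rangle$ on $S_{D-1}(A)$; nothing controls the value of $M$ at a second vector, and $S_{D-1}(A)$ may be as small as $D$-dimensional, in which case prescribing $M|\varphi\rangle$ already exhausts all freedom and condition (i) need not be realizable. As you concede, this is exactly the hard part, and it is left unproved. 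The paper avoids this entirely by a different mechanism: put $A_1$ in Jordan form, let $P$ project onto the blocks with nonzero eigenvalues ($\tilde D=\mathrm{rank}\,P$, $r$ the largest nilpotent block), and consider $R_n(A)=PS_n(A)$. Left multiplication by $A_1$ is injective on $PS_n(A)$ (it is invertible on the range of $P$), so $\dim R_n$ is nondecreasing and, by primitivity, must reach the full $\tilde D D$ by $n=\tilde D D$; in particular every $|\varphi\rangle\langle\psi|$ equals $PA$ for some $A\in S_{\tilde DD}(A)$. The identity $A_1^r=A_1^rP$ together with $A_1|\varphi\rangle=\mu|\varphi\rangle$ then converts $PA$ into the honest product $A_1^rA/\mu^r\in S_{\tilde DD+r}(A)$, and $\tilde D\le D-r$, $r\ge1$ give the bound $D^2-D+1$. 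This use of the Jordan structure of $A_1$ — realizing the projector $P$ inside the Kraus algebra as $A_1^r/\mu^r$ up to the invertible part — is the ingredient your proposal is missing, and it also delivers all of $\{|\varphi\rangle\langle\psi|\}$ at once, with no need for the final $\mathcal{U}_n$ chain.
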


\begin{proof}
(a) We define  $K_n(A,\varphi)$ as the span of all $H_m(A,\varphi)$
with $m\le n$ together with $|\varphi\rangle$. If ${\rm
dim}[K_n(A,\varphi)]<D$, then ${\rm dim}[K_{n+1}(A,\varphi)]>{\rm
dim}[K_{n}(A,\varphi)]$, since otherwise the map would not be
primitive. Thus, $K_{D-1}(A,\varphi)=\C^D$. That is, for all
$|\phi\> \in \C^D$, there exist matrices $A^{(n)}\in S_{k_n}(A)$,
$k_n\le D-1$ such that (with $A^{(0)}\propto\1$)
 \begin{equation}
 |\phi\rangle = \sum_{n=0}^{D-1} A^{(n)} |\varphi\rangle=
 \sum_{n=0}^{D-1} A^{(n)} \frac{A_1^{D-k_n}}{\mu^{D-k_n}}|\varphi\rangle,
 \end{equation}
and thus, $|\phi\rangle\in H_{D-1}(A,\varphi)$. (b) We write $A_1$ in
the Jordan standard form and divide it into two blocks. The first
one, of size $\tilde D\times \tilde D$, consists of all Jordan
blocks corresponding to non--zero eigenvalues, whereas the second
one contains all those corresponding to zero eigenvalues. We denote
by $P$ the projector onto the subspace where the first block is
supported and by $r\le
D-\tilde D$ the size of the largest Jordan block corresponding to a
zero eigenvalue. We have
 \begin{equation}
 \label{propA1}
 A_1 P = P A_1, \quad A_1^r =A_1^r P.
 \end{equation}
We define $R_n(A)=PS_n(A)$ and show that $R_{D\tilde D}(A)={\cal
M}_{\tilde D\times D}$. For all $n\in \mathbb{N}$, ${\rm
dim}[R_{n+1}(A)]\ge{\rm dim}[R_{n}(A)]$. The reason is that for any
linearly independent set of matrices $A^{(n)}_k\in R_n(A)$,
$A_1A^{(n)}_k\in R_{n+1}(A)$ are also linearly independent, given
that $A_1$ is invertible on its range. By following the reasoning of
\cite[Appendix A]{PVWC07} we get that, if ${\rm
dim}[R_{n+1}(A)]={\rm dim}[R_{n}(A)]=:D'$, then ${\rm
dim}[R_m(A)]=D'$ for all $m>n$, which is incompatible with ${\cal
E}_A$ being primitive unless $D'=\tilde D D$. Thus, for all
$|\psi\rangle\in \C^D$, there exists $A\in S_{\tilde D D}$ with
$|\varphi\rangle\langle \psi|= PA= A_1^r P A/\mu^r = A_1^r A/\mu^r
=A'\in S_{\tilde D D+r}$. By using that $\tilde{D}\leq D-r$ and that
$r\geq 1$ (since $A_1$ is supposed to be not invertible) we get
$\tilde D D+r\le D^2-D+1$, which concludes the proof.
\end{proof}

We have now the necessary tools to prove our main result.

\begin{theorem}\label{thm:mainthm}
Let ${\cal E}_A$ be a primitive quantum channel on $\MDD$ with $d$
Kraus operators. Then $q(\E_A)\leq i(A)$ and
\begin{enumerate}
  \item in general $i(A)\leq (D^2-d+1)D^2$,
  \item if the span of Kraus operators $S_1(A)$ contains an
  invertible element, then $i(A)\leq D^2-d+1$,
  \item if the span of Kraus operators $S_1(A)$ contains a
  non-invertible element with at least one non-zero eigenvalue, then $i(A)\leq D^2$.
\end{enumerate}
\end{theorem}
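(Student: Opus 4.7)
The inequality $q(\E_A)\le i(A)$ is already provided by Proposition~\ref{prop:iq}, so only the three upper bounds on $i(A)$ require proof. My plan is to establish parts~(2) and~(3) directly, and then deduce part~(1) by applying them to the iterated channel $\E_A^n$.

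For part~(2), the idea is to show that when $S_1(A)$ contains an invertible element $M$, the dimension $\dim S_k(A)$ is strictly increasing in $k$ until it reaches $D^2$. Since $M$ is a linear combination of the Kraus operators, $S_k(A)\cdot M\subseteq S_{k+1}(A)$, and invertibility of $M$ makes this inclusion dimension-preserving. Were equality $\dim S_{k+1}(A)=\dim S_k(A)$ to hold, one would have $S_{k+1}(A)=S_k(A)\cdot M$; combining this with the identity $S_{k+1}(A)=\sum_j A_j S_k(A)$ and iterating forces $S_{k+m}(A)=S_k(A)\cdot M^m$ for every $m$, which is of constant dimension strictly below $D^2$ and contradicts primitivity. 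Strict monotonicity starting from $\dim S_1(A)=d$ then yields $S_{D^2-d+1}(A)=\MDD$.

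For part~(3), I would combine the two assertions of Lemma~\ref{lemma2} applied to a non-invertible $A_1\in S_1(A)$ with nonzero eigenvalue $\mu$ and eigenvector $\ket{\varphi}$. Inspection of the proof of Lemma~\ref{lemma2} shows that only the membership $A_1\in S_1(A)$ is used, not that $A_1$ is a distinguished Kraus operator, so this generalization is legitimate. Part~(b) places $\ket{\varphi}\bra{\psi}\in S_{D^2-D+1}(A)$ for every $\ket{\psi}$, and part~(a) furnishes operators $X_k\in S_{D-1}(A)$ such that $\{X_k\ket{\varphi}\}$ spans $\C^D$. Left-multiplication then gives $X_k\ket{\varphi}\bra{\psi}\in S_{D-1}(A)\cdot S_{D^2-D+1}(A)\subseteq S_{D^2}(A)$, so every rank-one operator lies in $S_{D^2}(A)$, giving $i(A)\le D^2$.

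For part~(1), Lemma~\ref{lemma1} provides $A^{(n)}\in S_n(A)$ with $n\le D^2-d+1$ and $\mathrm{tr}(A^{(n)})\ne 0$, hence with at least one nonzero eigenvalue. The iterated channel $\E_A^n$ is again primitive and satisfies $S_1(\E_A^n)=S_n(A)\ni A^{(n)}$, so applying part~(2) or part~(3) to $\E_A^n$, according as $A^{(n)}$ is invertible or not, yields $i(\E_A^n)\le D^2$. Combined with $S_m(\E_A^n)=S_{nm}(A)$, this gives $i(A)\le n\,D^2\le(D^2-d+1)D^2$. The main obstacle is this lifting step: one must verify that parts~(2), (3) and the underlying Lemma~\ref{lemma2} remain valid when the distinguished element is an arbitrary element of $S_1$ rather than a named Kraus operator; this is the case because the arguments only rely on membership in $S_1$ of the ambient channel.
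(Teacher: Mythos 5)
Your proposal is correct and follows essentially the same route as the paper: Lemma~\ref{lemma1} to produce a trace-nonzero element of $S_n(A)$, Lemma~\ref{lemma2}(a)--(b) combined via left multiplication to span all rank-one operators, and passage to the iterated channel $\E_A^n$ to deduce the general bound from the special cases. The only differences are cosmetic: you reconstruct the strict-dimension-growth argument for part~(2) that the paper simply cites from \cite[Appendix A]{PVWC07}, and you explicitly justify applying Lemma~\ref{lemma2} to an arbitrary element of $S_1(A)$, a point the paper uses implicitly.
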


\begin{proof}
The inequality $q(\E_A)\leq i(A)$ is shown in Prop. \ref{prop:iq}.

$2.$ If there is an invertible element, then it follows from
\cite[Appendix A, Proposition 2]{PVWC07}that $\dim S_{n+1}(A)>\dim
S_n(A)$ until the full matrix space $\MDD$ is spanned and thus
$i(A)\leq D^2-d+1$.

$1.$ Let us denote by $\{A^{(n)}_k\}$ the Kraus operators
corresponding to ${\cal E}_A^{n}$. According to Lemma \ref{lemma1},
one of them, say $A^{(n)}_1$, has non--zero trace and therefore
there exists $|\varphi\>$ such that
$A^{(n)}_1|\varphi\>=\mu|\varphi\>$ with $\mu\not = 0$. If
$A^{(n)}_1$ is invertible, then $1.$ is implied by $2.$, so we can
assume that it is not invertible. According to Lemma
\ref{lemma2}.(b), for all $|\psi\rangle,|\chi\rangle\in \C^D$ we
have $|\varphi\rangle\langle \psi|\in S_{D^2-D+1}(B)$; and
according to Lemma \ref{lemma2}.(a) $|\chi\rangle\langle \psi|\in
S_{D^2}(B)= S_{n D^2}(A)$. This implies that
$S_{n D^2}(A)=M_{D\times D}$ and hence the general bound $1.$
follows.

The argument which proves $3.$ is completely analogous. The main
difference is that, in order to guarantee the existence of a Kraus
operator with non-zero eigenvalue, we have to apply Lemma
\ref{lemma1} for the general case $1.$ and to take the $n$'th power
of the quantum channel for some $n \le D^2-d+1$.
\end{proof}

We do not know whether, or in which cases, our bounds are sharp. A
simple lower bound to $i(A)$ comes from the examples showing that
the classical Wielandt's inequality is sharp. In these cases
$q(\E_A)=i(A)=D^2-2D+2$. A lower bound that goes beyond this value is given by the next example.

\begin{example}
Let us consider the completely positive map described by the following Kraus operators $A_i \in
\mathcal{M}_D$: $A_0 = \sum_{i=0}^{D-1} |i+1\>\<i|$ and $A_1 =
|1\>\<D-1|$, with $|D\> = |0\>$. In this case $i(A) = D^2 - D$ which is larger than the bound appearing in Wielandt's classical inequality whenever
$D>2$.
\end{example}
\begin{proof}
Consider the case $D>2$ as $D=2$ is readily verified by inspection. Then $A_1^2=0$ and $A_1A_0^kA_1=A_1\delta_{k,D-2}$. Therefore \be S_N(A)={\rm span}\{A_0^N,A_0^kA_1A_0^l\}, \ee
where $k,l=0,\ldots,D-1$ fulfill the additional constraint that \be\label{eq:const} k+l+1+n(D-1)+m D=N \ee for some $n,m\in\mathbb{N}_0$. The additional constraint comes from the fact that $A_1$ can stem from $A_1A_0^{D-2}A_1$ or $A_1A_0^{D-2}A_1A_0^{D-2}A_1$ etc. which is a monomial of degree $1+n(D-1)$. The fact that $A_0^{mD}=\1$ is taken care of by the additional factor $mD$.  Now assume that $N=D(D-1)-1$. Let us upper bound the number of linearly independent operators in $S_N(A)$. Clearly, for every chosen $n$ and $k$, we get that  $l$ and $m$ are fixed by the additional constraint. For $n=D-1$, the range of $k$ is by Eq.(\ref{eq:const}) restricted to $k=0,\ldots, D-3$. So in total we have at most $(D-2)+(D-1)D+1=D^2-1$ independent elements which  cannot span the entire matrix algebra. Thus $i(A)\geq D^2-D$ (if the map is primitive). That this bound is sharp, and the map actually primitive, is seen by noting that for $N=D^2-D$ the constraint in Eq.(\ref{eq:const}) allows us to choose $k$ and $l$ freely by adjusting $n$ and $m$. Then, however, $A_0^kA_1A_0^l$ runs through all matrix units which span the entire matrix algebra.
\end{proof}
We also note that for small dimension $D=2,3$ there is always an element in $S_1(A)$ which has a non-zero eigenvalue. In other words, in these cases the first bound in Thm.\ref{thm:mainthm} never applies without one of the other bounds. The fact that $S_1(A)$ has this property for $D=2,3$ stems from the classification of nilpotent subspaces \cite{nilpotents}: assume that $S_1(A)$ would be a nilpotent subspace within the space of $D\times D$ matrices. Then for $D=2$ its dimension would have to be one, so it could not arise from the Kraus operators of a quantum channel. Similarly, for $D=3$ there are (up to similarity transformations) two types of nilpotent subspaces \cite{nilpotents} with $d>1$: one of dimension $d=3$, the space of upper-triangular matrices, whose structure does not allow the trace-preserving property, and one of dimension $d=2$ which only leads to quantum channels having a (in modulus) degenerate largest eigenvalue. Hence, if $S_1(A)$ is generated by the Kraus operators of a primitive quantum channel, then it cannot be nilpotent if $D=2,3$.

In the following we will show some applications of the derived
bounds.


\section{Zero-error capacity}

The zero error capacity $C_0$ of a noisy channel was defined by
Shannon in \cite{Shannon} as follows: There exists a sequence of
codes of increasing block length such that the
rate of transmission approaches $C_0$ and the probability of error
after decoding  {\it is} zero (instead of {\it
approaches zero} as in the definition of the usual capacity).
Furthermore, this is not true for any value higher than $C_0$. This
concept becomes important in situations where no error can be
tolerated or when a fixed finite number of uses of the channel is
available and it constitutes a central topic in information theory
\cite{zero-review}. The definition can be translated
straightforwardly to the case of quantum channels \cite{Medeiros},
where a number of interesting results appear: the computation of
this is QMA-hard \cite{Shor} and it can be superactivated
\cite{Cubitt} (see also \cite{Duan,Duan1}).

We will show here a dichotomy behavior for the power of a quantum
channel with full-rank fixed point (e.g., a unital quantum channel)
as a consequence of our quantum Wielandt's inequality. If we think
of the power $\E^n$ as a channel describing the input-output
relation after $n$ units in time/space, then the subsequent result
shows that there is a critical time/length $n=q(\E_A)$ such that a
successful transmission through $\E^n$ implies the possibility of a
successful transmission to arbitrary $m\geq n$. By the quantum Wielandt's inequality, this critical value can be taken $(D^2-d+1)D^2$ and is therefore {\it universal}. It depends only on $D$ and not on the channel itself.

\begin{theorem}\label{thm:zero}
If $\E$ is a quantum channel with a full-rank fixed point, and we
call $C_0(\E)$ the 0-error-classical capacity of $\E$. Then,
either\footnote{In fact, one can consider here even the one-shot zero-error
capacity, that is, the one obtained with a single use of the channel.} $C_0(\E^n)\ge 1$ for all $n$ or $C_0(\E^{q(\E)})=0$.
\end{theorem}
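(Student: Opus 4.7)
The plan is a dichotomy on whether $\E$ is primitive in the sense of Proposition~\ref{prop:equiv}.

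If $\E$ is primitive, then by the definition of $q(\E)$ the channel $\E^{q(\E)}$ maps every density operator to one of full rank. Any two full-rank density matrices have supports equal to the whole Hilbert space and therefore cannot be perfectly distinguished by any POVM. This already rules out one-shot zero-error transmission, as emphasized in the footnote, and gives $C_0(\E^{q(\E)})=0$.

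If $\E$ is not primitive (but still has a full-rank fixed point), I would produce, for every $n\in\mathbb N$, two orthogonal pure codewords whose $\E^n$-images have orthogonal supports; this yields $C_0(\E^n)\ge 1$. By Proposition~\ref{prop:equiv}, non-primitivity of such $\E$ falls into one of two structural cases: \emph{(a)} the $1$-eigenspace of $\E$ has dimension at least two, or \emph{(b)} $\E$ admits a peripheral eigenvalue $e^{2\pi i/p}$ with $p\ge 2$. In case (a), the existence of a faithful invariant state implies (by a standard argument) that $\mathcal A:=\{X\in\MDD:\E^*(X)=X\}$ is a unital $*$-subalgebra of $\MDD$; as $\dim\mathcal A\ge 2$ this algebra contains a non-trivial projection $P$, and the identity $\mathrm{tr}[P\,\E^n(\rho)]=\mathrm{tr}[\E^{*n}(P)\,\rho]=\mathrm{tr}[P\rho]$ forces $\E^n$ to preserve both $\mathrm{range}(P)$ and $\mathrm{range}(\mathbbm 1-P)$ in the sense of supports. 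Pure codewords drawn from these two subspaces then give the conclusion. In case (b), the generalized Frobenius theorem for TPCPMs (cf.~\cite[Proposition 3.3]{FNW92},~\cite{gen-fro}) yields orthogonal projections $P_0,\ldots,P_{p-1}$ summing to $\mathbbm 1$ that are cyclically permuted by $\E^*$; pure codewords from two distinct ranges $\mathrm{range}(P_k)$ are sent by $\E^n$ into states supported in distinct cyclic shifts, which are orthogonal.

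The chief obstacle is the structural input required in case~(b): one needs the cyclic orthogonal decomposition induced by a non-trivial peripheral spectrum. Both this and the $*$-algebra property of $\mathcal A$ in case~(a) are classical facts for TPCPMs with a faithful fixed point; once they are granted, the construction of the two distinguishable codewords is just an unwinding of the definition of zero-error capacity together with the Schr\"odinger--Heisenberg duality between $\E$ and $\E^*$.
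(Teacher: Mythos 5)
Your proof is correct and follows the same top-level dichotomy as the paper's: split on whether $\E$ is primitive using the spectral characterization of Prop.~\ref{prop:equiv}, use the full-rank outputs of $\E^{q(\E)}$ to kill the capacity in the primitive case, and encode one bit via the peripheral spectral structure otherwise. The differences lie in the non-primitive case. Where the paper invokes Lindblad's explicit form $V\left(\oplus_i \rho_i\otimes M_{m_i}\right)V^\dagger$ of the fixed-point set and, for a non-trivial peripheral eigenvalue, passes to $\E^p$, you instead extract a non-trivial projection from the dual fixed-point $*$-algebra and use the cyclic decomposition $P_0,\ldots,P_{p-1}$ directly. These are equivalent structural inputs (Lindblad's theorem is essentially the statement that the dual fixed-point set is a $*$-algebra, and the cyclic decomposition applies because a unique full-rank fixed point forces irreducibility, as needed for the Evans--H{\o}egh-Krohn/FNW result you cite). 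Your cyclic route is in fact tighter in one respect: it produces orthogonal outputs of $\E^n$ for \emph{every} $n$ at once, whereas the paper's reduction to $\E^p$ directly covers only multiples of $p$ and implicitly needs a data-processing step --- if $\E^{mp}(\rho_0)\perp\E^{mp}(\rho_1)$ then already $\E^{n}(\rho_0)\perp\E^{n}(\rho_1)$ for $n\le mp$ --- to reach all $n$.

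One caveat concerns the primitive case: ruling out one-shot transmission through $\E^{q(\E)}$ does not by itself give the asymptotic $C_0(\E^{q(\E)})=0$, since the zero-error capacity is defined over block codes; one must also show that $\left(\E^{q(\E)}\right)^{\otimes m}$ produces no pair of orthogonal outputs for any $m$. The paper obtains this from the cited result of Cubitt, Chen and Harrow. If you intend the asymptotic capacity you should either supply or cite that multi-copy step; otherwise your argument establishes exactly the one-shot version that the theorem's footnote declares sufficient.
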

\begin{proof}
We split the problem into two cases:
\begin{description}
  \item[Case 1: ] \mbox{}\par
Let us assume that the channel has two (or more) different fixed
points. By following \cite{Lindblad}, the set of fixed points of
a quantum channel which has a full-rank fixed point is of the
form $V\left(\oplus_i \rho_{i}\otimes{M_{m_i}} \right)V^\dagger$
where $V$ is some unitary, the $\rho_i$'s are density matrices
with full-rank, and ${M_{m_i}}$ is a full matrix algebra of
dimension $m_i$. Consequently, if the direct sum is non-trivial,
we can encode a classical bit in the corresponding projectors.
If the direct sum is trivial, then the space of matrices is
non-trivial, i.e., there is a $m_i\geq 2$, and we can encode one
qubit in it. In either case $C_0(\E^n)\ge 1$ independent of $n$.

Similar statements hold if the channel has only one fixed point
(which is by assumption full-rank) but another eigenvalue $\mu$
of magnitude one: since $\mu$ is a root of unity, i.e., there is
an integer $p\leq D^2$ with $\mu^p=1$, we have that $\E^p$ has
several fixed points. So again we can safely encode a bit and
$C_0(\E^n)\ge 1$ independent of $n$.
  \item[Case 2: ] \mbox{}\par
If the channel has just one fixed point and no other eigenvalue
of magnitude one, then it is primitive by Prop.\ref{prop:equiv}.
So $\E^{q(\E)}$ has the property that all output states are
full-rank. This implies \cite{Cubitt} that $C_0(\E^{n})=0$ for
all $n\geq q(\E)$.
\end{description}
\end{proof}


\section{Frustration-free Hamiltonians and Matrix Product
States}

Matrix Product States have proven to be a useful family of quantum
states for explaining the low energy sectors of locally interacting
one-dimensional systems. They constitute a suitable variational
ansatz for instance to compute ground state energies to high
accuracy \cite{Frank-advances} which can be explained by the fact
that MPS approximate ground states of local 1D Hamiltonians well
\cite{Hastings}. Similarly they are used to understand effects on 1D
quantum systems on analytic grounds, such as string orders
\cite{PWSVC08}, symmetries \cite{SWPC09}, renormalization flows
\cite{renormalization} or sequential interactions \cite{sequential,sequential2}.

Associated to each translational invariant MPS of the form
\begin{equation}\label{eq:TI-MPS}
|\psi_A\>=\sum_{i_1,\ldots, i_N}{\rm tr}(A_{i_1}\cdots A_{i_N})|i_1\cdots i_N\>
\end{equation}
there is a {\it parent Hamiltonian} $H_A$ which is frustration-free
and has $|\psi_A\>$ as ground state. Let us start by defining the
concept of frustration-free Hamiltonian. Consider a local
translational invariant Hamiltonian in a spin chain $H=\sum_i
\tau^i(h)$ where $h$ denotes the local interaction term and $\tau$
the translation operator. Then,
\begin{definition}
The Hamiltonian is called \emph{frustration-free} if its ground
state $|\psi_0\>$ minimizes the energy {\it locally}, that is, if
\begin{equation}\label{ffre}
\min_{|\psi\>}\<\psi|\1\otimes h|\psi\>=\<\psi_0|\1\otimes h|\psi_0\>.
 \end{equation}
 \end{definition}
We assume w.l.o.g. that (\ref{ffre}) is equal to $0$.  Such
Hamiltonians include classical Hamiltonians, where the terms
commute, as well as all {\it parent Hamiltonians} appearing in the
Matrix Product State (MPS) theory \cite{FNW92,PVWC07,SCP10}. A
remarkable example is the AKLT Hamiltonian \cite{AKLT88}.

The corresponding local interaction term $h$ above is constructed as
the projector onto the orthogonal complement of the image of
\begin{equation}
\label{eq:gamma}X\in M_{D\times D}\mapsto
\sum_{i_1,\ldots, i_L}{\rm tr}(XA_{i_1}\cdots A_{i_L})|i_1\cdots
i_L\>,
\end{equation}
for some sufficiently large \emph{interaction range} $L$. Note that
the map in Eq.(\ref{eq:gamma}) is \emph{injective} for sufficiently
large $L$ iff the map $\E_A(X):=\sum_iA_iXA_i^\dagger$ is
primitive\footnote{$\E_A$ may be assumed to be trace-preserving
without loss of generality \cite{PVWC07}.} and that injectivity holds for all
$L\geq i(A)$. The following theorem which was proven in
\cite{FNW92,PVWC07} provides another application for the quantum
Wielandt's inequalities for $i(A)$:
\begin{theorem}
If the interaction range $L$ of the parent Hamiltonian $H_A$
satisfies $L>i(A)$, then the MPS $|\psi_A\>$ is the unique ground of
$H_A$, and $H_A$ has a spectral gap above the ground state energy.
\end{theorem}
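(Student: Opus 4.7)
The plan is to use $L > i(A)$ as the single structural input — it is precisely the condition that makes the linear map in Eq.~(\ref{eq:gamma}) injective, since by definition of $i(A)$ one has $S_L(A)=\MDD$, i.e.\ the map has full-dimensional image $D^2$. Once injectivity is in place, the rest is the standard analysis of parent Hamiltonians from \cite{FNW92,PVWC07}, which I would invoke piece by piece. The new contribution is only that Theorem~\ref{thm:mainthm} turns the sufficient interaction range into a universal number depending on $D$ and $d$ alone.

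First I would check that $|\psi_A\>$ is indeed a ground state. By construction, the local term $h$ is the projector onto the orthogonal complement of the image of the map $X\mapsto\sum_{i_1,\ldots,i_L}{\rm tr}(XA_{i_1}\cdots A_{i_L})|i_1\cdots i_L\>$. The reduced state of $|\psi_A\>$ on any $L$ consecutive sites is a mixture of pure states of exactly this form, so it is annihilated by $h$ and by every translate $\tau^i(h)$. Frustration-freeness (with local minimum set to $0$) then identifies $|\psi_A\>$ as a zero-energy ground state.

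For uniqueness, suppose $|\psi\>$ is any ground state of $H_A$. Frustration-freeness forces $|\psi\>$ into the kernel of every $\tau^i(h)$, which means that on each $L$-site window the local slice of $|\psi\>$ lies in the image of the injective map above. I would then apply the overlapping-window (intersection) argument of \cite[App.~A]{PVWC07}: comparing the parametrizations on two windows shifted by one site, injectivity promotes local admissibility to a global MPS representation with the same matrix tuple $\{A_i\}$. Together with translation invariance and strong irreducibility (which follows from primitivity via Proposition~\ref{prop:equiv}), this forces $|\psi\>=|\psi_A\>$ up to normalization.

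For the spectral gap I would follow the Nachtergaele/Knabe martingale strategy of \cite[Sec.~VI]{FNW92}. The key technical input, again a direct consequence of $L>i(A)$, is the \emph{intersection property}: the joint kernel of two overlapping $L$-site local projectors coincides with the ground-space projector of a single $(L+1)$-site block. An induction on block size then produces a gap $\gamma>0$ uniform in the chain length $N$. The main obstacle is this gap step: uniqueness falls out fairly cleanly from the intersection lemma, but the gap argument requires a careful telescoping estimate of products of non-commuting projectors. Since that machinery is already developed in the cited references, the only genuinely new point to write down is that under primitivity Theorem~\ref{thm:mainthm} guarantees $i(A)\le (D^2-d+1)D^2$, so $L=(D^2-d+1)D^2+1$ suffices universally.
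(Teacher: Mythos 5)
The paper itself gives no proof of this theorem---it is quoted from \cite{FNW92,PVWC07}---and your sketch correctly reconstructs the standard argument of those references: frustration-freeness of $|\psi_A\>$ by construction, uniqueness via injectivity plus the overlapping-window intersection argument, and the gap via the martingale method. One small correction: injectivity of the map in Eq.~(\ref{eq:gamma}) already holds at $L=i(A)$, so $L>i(A)$ is not ``precisely the condition'' for injectivity; the strict inequality is needed so that the \emph{overlap} of two adjacent $L$-site windows, which has length $L-1\geq i(A)$, is itself injective---that is what makes the intersection property (and hence both the uniqueness and the gap steps) go through.
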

Hence, the quantum Wielandt's inequality provides a bound for the
interaction length required to get a {\it good} parent Hamiltonian
for a MPS. Indeed, the existence of such inequality was already
conjectured in the context of MPS \cite[Conjectures 1 and 2]{PVWC07}
and some results obtained so far about MPS do directly depend on the
validity of that conjecture. In particular, a dichotomy result for
ground states of frustration-free Hamiltonians, sketched in
\cite{PVWC07} and for which we give a complete proof below, and the
characterization of the existence of global symmetries in arbitrary
MPS given in \cite{SWPC09}.

One might conjecture that the ground state of every frustration-free
Hamiltonian (with non-degenerate ground state) is a MPS. In fact,
the quantum Wielandt's inequality allows us to get a dichotomy
theorem in this direction:

\begin{theorem}
Take a local term $h$ with interaction length $L$ and assume that
$H_N=\sum_{i=1}^N \tau^i(h)$ is frustration-free and has a unique
ground state for every $N$. Its ground state can be represented as
an MPS with matrix size $D\times D$, where $D$ is
\begin{enumerate}
\item[(i)] either independent of $N$, \item[(ii)] or
$>\Omega(N^{\frac{1}{5}})$ for all prime numbers $N$.
\end{enumerate}
\end{theorem}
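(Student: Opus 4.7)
The plan is to argue by contraposition: assume a prime $N_0$ admits an MPS representation of $|\psi_{N_0}\>$ with bond dimension $D=D_{N_0}\leq cN_0^{1/5}$, for a small enough constant $c$ depending only on $d$ and $L$; I will deduce $\sup_N D_N<\infty$, i.e., alternative (i).

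I would first fix a minimal MPS representation $A=(A_i)_{i=1}^d$ of $|\psi_{N_0}\>$ and put it in the canonical block-diagonal form of \cite{PVWC07}. The trace in~(\ref{eq:TI-MPS}) on a ring of prime length $N_0$ receives contributions from the blocks weighted by $N_0$-th powers of the peripheral eigenvalues of $\E_A$, and primality of $N_0$ rules out nontrivial periodic block structures. Using minimality, one can then restrict to a single block on which $\E_A$ has a unique, full-rank fixed point; by Proposition~\ref{prop:equiv} this restriction is a primitive channel. Theorem~\ref{thm:mainthm} bounds $i(A)\leq(D^2-d+1)D^2$, and for $c$ small enough we obtain $L+i(A)<N_0$.

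Next I would propagate the representation to every ring length. For $N\geq L+i(A)$, the reduced state of $|\psi_A^{(N)}\>$ on $L$ consecutive sites has support equal to the image of the linear map $X\in\MDD\mapsto\sum_{i_1,\ldots,i_L}\mathrm{tr}(XA_{i_1}\cdots A_{i_L})|i_1\cdots i_L\>$ of~(\ref{eq:gamma}): the support is always contained in this image, and equality follows from the full-rank condition on $\E_A^{N-L}(\1)$ guaranteed by $N-L\geq i(A)$. Since this image depends only on $A$ and $L$, it is the same subspace for $N_0$ and any other $N\geq L+i(A)$. But $h$ kills $|\psi_{N_0}\>=|\psi_A^{(N_0)}\>$ on every window, so $h$ annihilates that common image, hence annihilates $|\psi_A^{(N)}\>$ on every window, making $|\psi_A^{(N)}\>$ a frustration-free ground state of $H_N$. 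Uniqueness forces $|\psi_N\>=|\psi_A^{(N)}\>$ and $D_N\leq D$ for every large $N$, while the finitely many small $N$ are controlled by the trivial bound $D_N\leq d^{N/2}$, yielding the desired uniform bound.

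The main obstacle is the reduction to a primitive channel via primality: the canonical MPS block decomposition allows periodic structures labeled by $p$-th roots of unity, and one must verify that primality of $N_0$ genuinely collapses these to trivial periodicity after restricting to a single block. A secondary subtlety is the exponent $1/5$ rather than the naive $1/4$ implied by $i(A)\leq D^4$; this presumably arises from a refined argument that blocks $L$-tuples of sites and applies Theorem~\ref{thm:mainthm} to the blocked channel, where the extra physical dimension costs one further power of $D$ in the resulting estimate.
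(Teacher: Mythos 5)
There is a genuine gap, and it is exactly the one you flag as ``the main obstacle'': the reduction to a single primitive block. Primality of $N_0$ (via \cite[Theorem 5]{PVWC07}) only guarantees a canonical form $A_i=\oplus_{j=1}^b A_i^j$ in which \emph{each block} is injective; it does not force $b=1$. Neither does minimality of the representation: a state can genuinely require $b\geq 2$ injective blocks (a cat-like superposition is the standard example), and nothing at the level of a single chain length $N_0$ rules this out. The paper closes this gap with \cite[Theorem 11]{PVWC07}: if $b\geq 2$ and $N_0\geq 3(b-1)(L_0+1)+L$ with $L_0=\max_j i(A^j)$, then every $|\psi_{A^j}\>$ is separately a ground state of $H_{N_0}$, contradicting uniqueness. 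Hence either $N_0 < 3(b-1)(L_0+1)+L$, which with $b\leq D$ and $L_0\leq O(D^4)$ gives $D>\Omega(N_0^{1/5})$, or $b=1$. This also explains the exponent: the fifth power is $b\cdot L_0 = O(D\cdot D^4)$, not a blocking artifact as you speculate in your last paragraph; your guess there is incorrect, and without the multi-block argument your constant-$c$ contrapositive would naturally produce $N^{1/4}$, which is not what needs to be proved and, more importantly, is not justified because the $b\geq 2$ case is never excluded.

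Once $b=1$ is actually established, your second half (for $N-L\geq i(A)$ the support of the reduced state on $L$ sites equals the image of the map in Eq.~(\ref{eq:gamma}), this image is $N$-independent, $h$ annihilates it because it annihilates $|\psi_A^{(N_0)}\>$, hence $|\psi_A^{(N)}\>$ is a frustration-free ground state for all larger $N$ and by uniqueness \emph{the} ground state) is correct and coincides with the paper's $b=1$ branch, where $\ker(h)\supseteq\{\sum {\rm tr}(XA_{i_1}\cdots A_{i_L})|i_1\cdots i_L\>: X\in\MDD\}$ follows from $N-L\geq i(A)$ so that $S_{N-L}(A)=\MDD$. So the propagation step is fine; the missing ingredient is the dichotomy on the number of blocks, which is the heart of the theorem and the source of the $N^{1/5}$ threshold.
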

\begin{proof}
Let us recall from \cite[Theorem 5]{PVWC07} that each MPS with $D<N$
and $N$ prime can be mapped into a {\it canonical} decomposition
where all matrices are block diagonal $A_i=\oplus_{j=1}^b A_i^j$ and
each block satisfies injectivity. Moreover, \cite[Theorem
11]{PVWC07} states that if $b\ge 2$, $L_0=\max_j i(A^j)$ and $L$ is
the interaction length of any frustration-free translational
invariant Hamiltonian $H$ on $N$ spins having $|\psi_A\>$ as ground
state, the condition $N\ge 3(b-1)(L_0+1)+L$ implies that
$|\psi_{A^j}\>$ is also a ground state of $H$ for all $j$. Since the
quantum Wielandt's inequality allows us to bound $L_0\le O(D^4)$ and
trivially $b\le D$, we get that either (ii) $D\ge \Omega(N^\frac{1}{5})$,
or $b=1$ and $\ker(h)\ni \sum_{i_1,\cdots, i_L}{\rm
tr}(XA_{i_1}\cdots A_{i_L})|i_1\cdots i_L\>$ where $X\in
S_{N-L}(A)$. Since by the quantum Wielandt's inequality again
$N-L\ge i(A)$, we get that $\ker(h)\supseteq \{\sum_{i_1,\cdots,
i_L}{\rm tr}(XA_{i_1}\cdots A_{i_L})|i_1\cdots i_L\>: X\in
M_{D\times D}\}$. This trivially implies that $|\psi_A\>$ is also a
ground state for $H_{N'}$ when $N'>N$ and therefore the only one, so we
obtain (i).
\end{proof}

Regarding the restriction to prime $N$, note that by the Prime
Number Theorem the number of primes less than or equal to a given
$N$ is asymptotically $\frac{N}{\log N}$. Therefore in (ii) there are {\it many} lengths for which there is no MPS representation of the ground state with {\it small} matrices.


\section{Conclusions and open problems}

The present work focuses on finding dimension dependent bounds
for the number of times that a quantum channel has to be applied in
order to have a full-rank Choi matrix. Once this is obtained, bounds
on the quantum index of primitivity $q$ are straightforwardly
achieved, since $q \leq i(A)$. As direct applications of these
results, we derive dichotomy theorems for the zero-error capacity of
quantum channels as well as a couple of results in Matrix Product
States theory. The first one is the demonstration of a conjecture
with interesting implications for ground states of frustration-free
Hamiltonians and the other a theorem which introduces new
implications concerning the interaction-range of Parent
Hamiltonians.

As a possible future research, we suggest that it might be
advantageous to focus on computing bounds for $q$ directly, since $q
\neq i(A)$ in general. This is interesting because, while some
applications (like the ones in the MPS context) require bounds on
$i(A)$, others like Thm. \ref{thm:zero} are based on $q$. For
instance, from a purely mathematical point of view, $i(A)$ is not
applicable for positive maps (the usual framework of Frobenius
theory), unless the map is completely positive. Furthermore, we
leave  open the question about optimal bounds for both $q$ and
$i(A)$.

Another possible future research is to relate the quantum Wielandt's
inequality to graph theory and quantum random walks. In the
classical case, there is a close relationship between stochastic
matrices and graph theory (by taking $A$ the adjacency matrix of a
graph) which makes the inequality broadly applicable. In fact, the
usual proofs of the classical inequality are based on the graph
picture. However, although there are different attempts to establish
a relationship between quantum channels and quantum graphs
\cite{Kempe,BT07}, there is not any well-defined analogous one for
the quantum context.


\section*{Acknowledgment}

The authors would like to thank Frank Verstraete for his useful suggestions and discussions and A. Nogueira for the invaluable technical assistance.

\ifCLASSOPTIONcaptionsoff
  \newpage
\fi

\end{document}